
\documentclass[12pt]{article}%
\usepackage{amsfonts}
\usepackage{amsmath}
\usepackage{revsymb}
\usepackage{graphicx}
\usepackage{geometry}
\usepackage{wrapfig}
\usepackage{amssymb}
\usepackage{hyperref}
\usepackage[square,sort&compress,comma]{natbib}
\usepackage{hypernat}%
\setcounter{MaxMatrixCols}{30}
\providecommand{\U}[1]{\protect\rule{.1in}{.1in}}
\newtheorem{theorem}{Theorem}

\newtheorem{corollary}[theorem]{Corollary}

\newtheorem{definition}[theorem]{Definition}

\newtheorem{lemma}[theorem]{Lemma}

\newenvironment{proof}[1][Proof]{\noindent\textbf{#1.} }{\ \rule{0.5em}{0.5em}}

\geometry{bottom=1in, top=1in}
\begin{document}

\title{Two-sided estimates of minimum-error distinguishability of mixed quantum
states via generalized Holevo-Curlander bounds}
\author{Jon Tyson\thanks{jonetyson@X.Y.Z, where X=post, Y=Harvard, Z=edu}\\Jefferson Lab, Harvard University}
\date{November 27, 2008\\
Journal ref: \href{http://link.aip.org/link/?JMAPAQ/50/032106/1}{J. Math.
Phys. \textbf{50}, 032106 (2009)} }
\maketitle

\begin{abstract}
\noindent We prove a concise factor-of-two estimate for the failure-rate of
optimally distinguishing an arbitrary ensemble of mixed quantum states,
generalizing work of Holevo [Theor. Probab. Appl. \textbf{23}, 411 (1978)] and
Curlander [Ph.D. Thesis, MIT, 1979]. A modification of the minimal principle
of Concha and Poor [\textit{Proceedings of the 6th International Conference on
Quantum Communication, Measurement, and Computing} (Rinton, Princeton, NJ,
2003)] is used to derive a sub-optimal measurement which has an error rate
within a factor of two of the optimal \textit{by construction}. This
measurement is quadratically weighted, and has appeared as the first iterate
of a sequence of measurements proposed by Je\v{z}ek, \v{R}eh\'{a}\v{c}ek, and
Fiur\'{a}\v{s}ek [Phys. Rev. A \textbf{65}, 060301]. Unlike the so-called
\textquotedblleft pretty good\textquotedblright\ measurement, it coincides
with Holevo's asymptotically-optimal measurement in the case of
non-equiprobable pure states. A quadratically-weighted version of the
measurement bound by Barnum and Knill [J. Math. Phys. \textbf{43}, 2097
(2002)] is proven. Bounds on the distinguishability of syndromes in the sense
of Schumacher and Westmoreland [Phys. Rev. A \textbf{56}, 131 (1997)] appear
as a corollary. An appendix relates our bounds to the trace-Jensen inequality.

\end{abstract}

\pagebreak

\section{Introduction}

The \textit{minimum-error quantum distinguishability problem}\ is of obvious
practical importance in the design of optical detectors \cite{Helstrom Quantum
Detection and Estimation Theory} and of fundamental importance in subject of
quantum information \cite{pure state HSW theorem, mixed state HSW theorem,
Holevo mixed state HSW theorem} and quantum computation \cite{Ip Shor's
algorithm is optimal, Bacon, Childs from optimal to efficient algo, optimal
alg for hidden shift, moore and russels distinguishing, Bacon new hidden
subgroup}:

\begin{quote}
If an unknown state $\rho_{k}$ is randomly chosen from a known ensemble of
quantum states, what is the chance that the value of $k$ will be discovered by
an optimal measurement?
\end{quote}

\noindent Although various necessary and sufficient conditions for optimal
measurements have been derived
\cite{Yuen Ken Lax Optimum testing of
multiple,Holevo optimal measurement conditions 1,Holevo remarks on optimal
measurements, Belavkin optimal distinction of non-orthogonal quantum signals,
Belavkin Optimal multiple quantum statistical hypothesis testing, Belavkin and
Vancjan, Barnett and Croke On the conditions for discrimination between
quantum states with minimum error}
(see also \cite{eldar short and sweet optimal measurements}), and a number of
numerical algorithms for computing optimal measurements have been implemented
\cite{eldar short and sweet optimal measurements, Helstrom Bayes cost
reduction, Jezek Rehacek and Fiurasek Finding optimal strategies for minimum
error quantum state discrimination, Hradil et al Maximum Likelihood methods in
quantum mechanics, Tyson Simplified and robust conditions for optimum testing
of multiple hypotheses}, it is unlikely that an explicit general solution is
forthcoming. A number of works give interesting general upper and/or lower
bounds on quantum distinguishability
\cite{pure state HSW theorem,Hayden Leung Multiparty
Hiding,Barnum Knill UhOh,Montanaro on the distinguishability of random quantum
states,Daowen Qui Minimum-error discrimination between mixed quantum
states,Montanaro a lower bound on the probability of error in quantum state
discrimination, Hayashi Kawachi Kobayashi quantum measurements for hidden
subgroup problems with optimal sample complexity, Qiu and Li bounds on the minimum error discrimination between
mixed quantum states}%
.

The theory of optimal measurements has been generalized in several directions,
including to Belavkin and Maslov's theory of wave discrimination
\cite{Belavkin Book} and to the theory of optimal quantum channel reversals,
in the sense of average entanglement fidelity \cite{Barnum Knill UhOh,
Fletcher Thesis Channel Adapted quantum error correction,Fletcher Shor Win
Optimum quantum error recovery using semidefinite programming,Fletcher Shor
Win Channel-Adapted quantum error correction for the amplitude dampin
channel,Fletcher Shor Win Structured Near-Optimal Channel-Adapted Quantum
Error Correction, Taghavi Channel-Optimized quantum error correction}. Success
rates of optimal measurements have recently been expressed in terms of the
conditional min-entropy of bipartite classical-quantum states in Theorem 1 of
\cite{Konig Renner Schaffner Operational meaning of min and max entropy}. In
particular, the problem of finding the conditional min-entropy of an arbitrary
bipartite quantum state generalizes the optimal distinguishability problem.

\subsection{Results}

Theorem $\ref{Theorem mixed state Holevo interval}$ of section
$\ref{Section mixed state distinguishibility results}$ combines ideas of
Holevo \cite{Holovo Assym Opt Hyp Test}, Curlander \cite{Curlander thesis
MIT}, and Concha \& Poor \cite{Concha Thesis, Concha and Poor mixed optimal in
lease squares sense, Concha and Poor book chapter on Advances in quantum
detection} to give mathematically concise upper and lower distinguishability
bounds for arbitrary ensembles of mixed quantum states. Employing an
approximate minimal principle, a suboptimal measurement is derived which has a
failure rate within a factor of two of the optimal \textit{by construction}.
This measurement is observed to be the first iteration in the sequence of
measurements of Je\v{z}ek, \v{R}eh\'{a}\v{c}ek, and
Fiur\'{a}\v{s}ek.\cite{Jezek Rehacek and Fiurasek Finding optimal strategies
for minimum error quantum state discrimination} In the case of pure states
this measurement reduces to Holevo's asymptotically-optimal measurement
\cite{Holovo Assym Opt Hyp Test}, which is the quadratically-weighted Belavkin
square root measurement.\cite{Belavkin optimal distinction of non-orthogonal
quantum signals, Belavkin Optimal multiple quantum statistical hypothesis
testing, prequel}

Theorem $\ref{Theorem result Barnum Knill Proof Montanaro}$ of section
$\ref{Section Curlander's upper bound}$ combines ideas of Curlander and Holevo
to give somewhat tighter distinguishability bounds. Furthermore, a
quadratically-weighted version of Barnum and Knill's measurement bounds
\cite{Barnum Knill UhOh} are obtained, as are bounds on distinguishability of
syndromes in the sense of Schumacher and Westmoreland.

To motivate our considerations in the case of mixed states, section
$\ref{section Holevo's pure state bound}$ revisits Holevo and Curlander's
pure-state bounds.

Future directions appear in the final section. The appendix relates our bounds
to the trace-Jensen inequality.

\section{Holevo-Curlander pure state distinguishability bounds
\label{section Holevo's pure state bound}}

Before attempting to distinguish elements of mixed-state ensembles, it is
instructive to revisit a pure-state bound used by Holevo in his proof of the
asymptotic optimality Theorem (Theorem
\ref{Theorem Holevo's Assymtotic optimality theorem}, below).

\begin{theorem}
[Holevo \cite{Holovo Assym Opt Hyp Test}]The ensemble $\mathcal{E}%
_{m}=\left\{  \left(  \psi_{k},p_{k}\right)  \right\}  _{k=1,\ldots,m}$ of
linearly-independent pure states has the following minimum-error
distinguishability bound:%
\begin{equation}
P_{\text{fail}}^{\text{optimal}}\leq2\gamma_{\text{Holevo}}\text{,}
\label{holevo upper bound}%
\end{equation}
where%
\begin{equation}
\gamma_{\text{Holevo}}=1-\operatorname*{Tr}\sqrt{%
{\displaystyle\sum_{k}}
p_{k}^{2}\left\vert \psi_{k}\right\rangle \left\langle \psi_{k}\right\vert
}\text{,} \label{qholevo formula}%
\end{equation}
and $P_{\text{fail}}^{\text{optimal}}$ is the failure rate of the optimal measurement.
\end{theorem}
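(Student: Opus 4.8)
The strategy I would follow is to exhibit an *explicit* suboptimal measurement whose failure rate is bounded above by $2\gamma_{\text{Holevo}}$, since $P_{\text{fail}}^{\text{optimal}}$ can only be smaller. The natural candidate is the quadratically-weighted square-root (Belavkin) measurement: set $\Lambda = \sum_k p_k^2 |\psi_k\rangle\langle\psi_k|$, restricted to the span of the $\psi_k$ (where it is invertible by linear independence), and define POVM elements $\Pi_k = p_k^2\,\Lambda^{-1/2}|\psi_k\rangle\langle\psi_k|\Lambda^{-1/2}$. One checks $\sum_k \Pi_k = P_{\text{span}}$, so this is a valid measurement on the relevant subspace; the success probability is $P_{\text{succ}} = \sum_k p_k \operatorname{Tr}(\Pi_k |\psi_k\rangle\langle\psi_k|) = \sum_k p_k^{3}\,|\langle\psi_k|\Lambda^{-1/2}|\psi_k\rangle|^2$.

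The key analytic step is to lower-bound this sum in terms of $\operatorname{Tr}\sqrt{\Lambda}$. I would write $a_k := p_k\langle\psi_k|\Lambda^{-1/2}|\psi_k\rangle = \operatorname{Tr}(\Lambda^{1/2}\Pi_k) \ge 0$ (using $p_k|\psi_k\rangle\langle\psi_k| = \Lambda^{1/2}\Pi_k\Lambda^{1/2}$ after a short manipulation), so that $\sum_k a_k = \operatorname{Tr}\sqrt{\Lambda}$, while $P_{\text{succ}} = \sum_k p_k a_k^2 \ge \sum_k a_k^2 \cdot \min_k p_k$ is too lossy — instead I would use the sharper route $P_{\text{succ}} = \sum_k p_k a_k^2$ together with the elementary inequality $p_k a_k^2 \ge 2 a_k - a_k^2/p_k$ (completing the square: $p_k a_k^2 - 2a_k + a_k^2/p_k = (a_k\sqrt{p_k} - a_k/\sqrt{p_k})^2 \ge 0$ — wait, that has the wrong homogeneity). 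The correct elementary bound, which is the heart of the argument, is $x^2 \ge 2x - 1$ applied after normalizing: since $a_k \le p_k \cdot \langle\psi_k|\Lambda^{-1/2}|\psi_k\rangle$ and $\Lambda \le$ (something controlling $\Lambda^{-1/2}$), one gets $a_k \le 1$, hence $p_k a_k^2 \ge p_k(2a_k - 1) = 2p_k a_k - p_k$. Actually the cleanest version: from $\Lambda \preceq \Lambda^{1/2}\cdot\|\Lambda^{1/2}\|$ one does not directly get $a_k\le1$; the right observation is $a_k^2 = (\operatorname{Tr}\Lambda^{1/2}\Pi_k)^2 \le \operatorname{Tr}(\Lambda\Pi_k)\operatorname{Tr}(\Pi_k) = \operatorname{Tr}(\Lambda\Pi_k)$ by Cauchy–Schwarz (since $\operatorname{Tr}\Pi_k = 1$), and $\sum_k\operatorname{Tr}(\Lambda\Pi_k) = \operatorname{Tr}\Lambda = \sum_k p_k^2 = \sum_k p_k\cdot p_k$. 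Hmm — this still needs care. Let me restate the clean path below.

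The genuinely clean argument, which I expect to be the crux, runs as follows. Using $a_k\ge 0$, $\sum_k a_k = \operatorname{Tr}\sqrt{\Lambda} = 1-\gamma$, and the bound $a_k\le 1$ (each $a_k = \operatorname{Tr}(\Lambda^{1/2}\Pi_k)\le \operatorname{Tr}(\Pi_k)\cdot\|\Lambda^{1/2}\|$ — and since $\Lambda\preceq I$ after the normalization $\sum p_k=1$ forces $\|\Lambda\|\le 1$, so $\|\Lambda^{1/2}\|\le1$), one has $a_k^2\le a_k$, giving $P_{\text{succ}}=\sum_k p_k a_k^2$. The weighting by $p_k$ is the obstacle: I would instead bound $P_{\text{fail}} = 1 - \sum_k p_k a_k^2$ and split $1 = \sum_k p_k$, so $P_{\text{fail}} = \sum_k p_k(1 - a_k^2) \le \sum_k p_k(1-a_k)(1+a_k) \le 2\sum_k p_k(1-a_k) \le 2\sum_k(1 - a_k) \cdot$(needs $p_k\le1$, true) wait $\sum_k p_k(1-a_k)$ versus $\sum_k(1-a_k)$ — since $1-a_k\ge0$ and $p_k\le1$, we get $\le 2\sum_k(1-a_k)$, but $\sum_k(1-a_k) = m - (1-\gamma)$, which is not $\gamma$. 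So the $p_k$-weighting must be kept: $P_{\text{fail}} \le 2\sum_k p_k(1-a_k) = 2\sum_k p_k - 2\sum_k p_k a_k = 2 - 2\sum_k p_k a_k$, and the final step I need is $\sum_k p_k a_k \ge \sum_k a_k = 1-\gamma$, i.e. $\sum_k(p_k-1)a_k\ge0$ — which is false since $p_k\le1$. Thus the *real* key inequality must be $\sum_k p_k a_k^2 \ge 2\sum_k a_k - 1$, obtained not term-by-term but globally; I expect the actual proof establishes $\sum_k p_k a_k^2 \ge \bigl(\sum_k a_k\bigr)^2 / \bigl(\sum_k a_k/p_k\bigr)$ is not it either. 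The correct and clean identity is that $\sum_k p_k a_k^2 = \operatorname{Tr}(\Lambda^{1/2}\cdot\Lambda^{1/2}) = \operatorname{Tr}\Lambda$? No. I would therefore, in the actual writeup, compute $\sum_k p_k a_k^2 = \sum_k p_k^3\langle\psi_k|\Lambda^{-1/2}|\psi_k\rangle^2$ directly, recognize this as $\operatorname{Tr}(\Lambda^{-1/2}\Lambda\Lambda^{-1/2}\Lambda) $-type expression, and deploy the operator Cauchy–Schwarz / Kadison inequality to compare $\operatorname{Tr}\sqrt{\Lambda}$ with it. The main obstacle, clearly, is pinning down this one operator inequality relating $P_{\text{succ}}$ of the quadratically-weighted square-root measurement to $\operatorname{Tr}\sqrt{\sum_k p_k^2|\psi_k\rangle\langle\psi_k|}$ with exactly the constant $2$; once that inequality is in hand, $P_{\text{fail}}^{\text{optimal}} \le P_{\text{fail}}^{\text{sqrt}} \le 2\gamma_{\text{Holevo}}$ follows immediately.
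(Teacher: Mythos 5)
Your overall strategy --- exhibit the quadratically-weighted square-root measurement explicitly and bound its failure rate by $2\gamma_{\text{Holevo}}$ --- is sound, but the proposal as written has a genuine gap: the one inequality everything hinges on is never established, and you say so yourself (``once that inequality is in hand \dots follows immediately''). The reason you could not close it is a bookkeeping error in the normalization of your $a_k$. With $\Lambda=\sum_k p_k^2|\psi_k\rangle\langle\psi_k|$ and $a_k=p_k\langle\psi_k|\Lambda^{-1/2}|\psi_k\rangle$, one has $\operatorname{Tr}(\Lambda^{1/2}\Pi_k)=p_k^2\langle\psi_k|\Lambda^{-1/2}|\psi_k\rangle=p_k a_k$, \emph{not} $a_k$; summing over $k$ gives
\begin{equation*}
\sum_k p_k a_k=\operatorname{Tr}\sqrt{\Lambda}=1-\gamma_{\text{Holevo}},
\end{equation*}
whereas $\sum_k a_k=\operatorname{Tr}\bigl(\Lambda^{-1/2}\sum_k p_k|\psi_k\rangle\langle\psi_k|\bigr)$ is not $\operatorname{Tr}\sqrt{\Lambda}$. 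Once the normalization is corrected, the ``key inequality'' you were hunting for is just Jensen's inequality (equivalently Cauchy--Schwarz) with the probability weights $p_k$: since $\sum_k p_k=1$ and $x\mapsto x^2$ is convex,
\begin{equation*}
P_{\text{succ}}=\sum_k p_k a_k^2\;\geq\;\Bigl(\sum_k p_k a_k\Bigr)^2=(1-\gamma_{\text{Holevo}})^2,
\end{equation*}
so $P_{\text{fail}}\leq 1-(1-\gamma_{\text{Holevo}})^2=\gamma_{\text{Holevo}}(2-\gamma_{\text{Holevo}})\leq 2\gamma_{\text{Holevo}}$. None of your term-by-term attempts ($a_k\le1$, completing the square, dropping the $p_k$ weights) is needed, and several of them, as you noticed, cannot work because the bound is global rather than termwise.

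It is also worth noting that this (repaired) route is not the one the paper uses for the quoted theorem. There, Holevo's argument replaces $P_{\text{fail}}$ by the surrogate cost $C_{\text{Holevo}}(\{e_k\})=\sum_k p_k\|\psi_k-e_k\|^2$ via the identity $1-|\langle\psi_k,e_k\rangle|^2=\tfrac12\|e_k-\psi_k\|^2(1+\langle\psi_k,e_k\rangle)\le\|e_k-\psi_k\|^2$ (valid under the phase convention $\langle e_k,\psi_k\rangle\ge0$), and then minimizes $C_{\text{Holevo}}$ exactly over orthonormal bases; the minimum is $2\gamma_{\text{Holevo}}$, attained at the measurement $(\ref{Holevo measurement vector})$, which therefore \emph{emerges} as the minimizer rather than being guessed. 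Your direct evaluation of that measurement's success probability is instead the Curlander-style argument the paper gives later, in Theorem \ref{Theorem result Barnum Knill Proof Montanaro}, where $a_k=\langle e_k^{\text{Holevo}},\psi_k\rangle$; once completed it actually yields the sharper bound $\gamma_{\text{Holevo}}(2-\gamma_{\text{Holevo}})$, at the price of taking the measurement as an ansatz.
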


A two-sided version of Holevo's bound $\left(  \ref{holevo upper bound}%
\right)  $%
\begin{equation}
P_{\text{fail}}^{\text{optimal}}\in\left[  \gamma_{\text{Holevo}}%
,2\gamma_{\text{Holevo}}\right]
\text{,\label{Holevo pure state two-sided bound}}%
\end{equation}
was proved using different techniques by Curlander \cite{Curlander thesis MIT}
under the additional assumption of equiprobability ($p_{k}=1/m$). Equation
$\left(  \ref{Holevo pure state two-sided bound}\right)  $ follows without
Curlander's restriction by the following trivial modification of Holevo's argument.

Holevo restricted attention to orthonormal von Neumann measurement bases
$\left\{  e_{k}\right\}  $,\footnote{Rank-1 projective measurements are
optimal for distinguishing linearly-independent pure states.\cite{kennedy
linearly independent implies von Neumann,Belavkin optimal distinction of
non-orthogonal quantum signals,Helstrom Quantum Detection and Estimation
Theory,Mochon PGM}} with phases chosen so that
\begin{equation}
\left\langle e_{k},\psi_{k}\right\rangle \geq0. \label{Holevo positive ip
condition}%
\end{equation}
Instead of minimizing the probability of failure
\begin{equation}
P_{\text{fail}}\left(  \left\{  e_{k}\right\}  \right)  =\sum_{k=1}^{m}%
p_{k}\left(  1-\left\vert \left\langle e_{k},\psi_{k}\right\rangle \right\vert
^{2}\right)  ,
\end{equation}
Holevo considered the tractable approximation%
\begin{equation}
C_{\text{Holevo}}\left(  \left\{  e_{k}\right\}  \right)  =\sum_{k=1}^{m}%
p_{k}\left\Vert \psi_{k}-e_{k}\right\Vert ^{2}\text{,}
\label{pure state holevo cost}%
\end{equation}
which is equation 8 of \cite{Holovo Assym Opt Hyp Test}$.$ Since the phase
condition $\left(  \ref{Holevo positive ip condition}\right)  $ implies that%
\begin{equation}
1-\left\vert \left\langle \psi_{k},e_{k}\right\rangle \right\vert ^{2}%
=\frac{\left\Vert e_{k}-\psi_{k}\right\Vert ^{2}}{2}\left(  1+\left\langle
\psi_{k},e_{k}\right\rangle \right)  \in\left[  1/2,1\right]  \times\left\Vert
e_{k}-\psi_{k}\right\Vert ^{2}, \label{Holevo low level factor of two bound}%
\end{equation}
one has%
\begin{equation}
P_{\text{fail}}\left(  \left\{  e_{k}\right\}  \right)  \in\left[
1/2,1\right]  \times C_{\text{Holevo}}\left(  \left\{  e_{k}\right\}  \right)
, \label{holevo
minimized both sides of this}%
\end{equation}
where we use the notation $\left[  a,b\right]  \times c=\left[  ac,bc\right]
$. The bound $\left(  \ref{Holevo pure state two-sided bound}\right)  $
follows from minimization of $C_{\text{Holevo}}$. The minimizer is the
(usually sub-optimal) measurement basis\footnote{The final section of Holevo's
paper contains minor algebra errors. A corrected version of the minimizer
$\left(  \ref{Holevo measurement vector}\right)  $ appears in
\cite{EldarSquareRootMeasurement}, which also removes Holevo's assumption of
linear-independence. (This generalization may also be accomplished simply by
using Naimark's theorem, as advocated by Kebo \cite{Kebo thesis}.) Holevo's
measurement $\left(  \ref{Holevo measurement vector}\right)  $ belonged to the
previously-considered class of Belavkin weighted square root measurements.
\cite{Belavkin optimal distinction of non-orthogonal quantum signals, Belavkin
Optimal multiple quantum statistical hypothesis testing}, also called
\textquotedblleft weighted least-squares measurements\textquotedblright%
\ \cite{EldarSquareRootMeasurement} and \textquotedblleft generalized `pretty
good' measurements\textquotedblright\ \cite{Mochon PGM}.}%
\begin{equation}
e_{k}^{\text{Holevo}}=\left(
{\displaystyle\sum}
p_{\ell}^{2}\left\vert \psi_{\ell}\right\rangle \left\langle \psi_{\ell
}\right\vert \right)  ^{-1/2}p_{k}\left\vert \psi_{k}\right\rangle \text{.}
\label{Holevo measurement vector}%
\end{equation}

\section{Definitions, background, and notation:\label{definitions section}}

In this section we collect the technical definitions and mathematical
background needed for the rest of the paper. Throughout we shall consider an
ensemble
\begin{equation}
\mathcal{E}_{m}=\left\{  \left(  \rho_{k},p_{k}\right)  \right\}
_{k=1,\ldots,m}%
\end{equation}
of quantum states $\rho_{k}$ on a Hilbert space $\mathcal{H}$ with
\textit{a-priori} probabilities $p_{k}$, with $%
{\displaystyle\sum}
p_{k}=1$ and $\operatorname*{Tr}\rho_{k}=1.$ One may take $m=\infty$ without
changing our results. For the special case of pure states, $\rho_{k}$ will be
denoted by $\rho_{k}=\left\vert \psi_{k}\right\rangle \left\langle \psi
_{k}\right\vert $.

\begin{definition}
The ensemble $\mathcal{E}_{m}$ is \textbf{equiprobable} if $p_{k}=1/m$ for all
$k$. The subspace $\operatorname*{Span}\left(  \mathcal{E}_{m}\right)  $ is
the \textbf{span} of the ensemble $\mathcal{E}_{m}$, i.e. the span of the
ranges of the $p_{k}\rho_{k}$. A \textbf{positive-operator valued measure
(POVM) }(see, for example, p. 74 of \cite{Helstrom Quantum Detection and
Estimation Theory}) for distinguishing $\mathcal{E}_{m}$ is a collection of
positive semidefinite operators $\left\{  M_{k}\right\}  _{k=1,..,m}$ such
that $\sum M_{k}=%
\openone
_{\mathcal{H}}$ or $%
\openone
_{\operatorname*{Span}\left(  \mathcal{E}_{m}\right)  }$. The probability that
the value $i$ is detected when the POVM is applied to the state $\rho_{j}$ is
given by $p_{i|j}=\operatorname*{Tr}M_{i}\rho_{j}$. In particular, the
\textbf{success rate} for the POVM to distinguish the ensemble $\mathcal{E}%
_{m}$ is given by%
\begin{equation}
P_{\text{succ}}\left(  M_{k}\right)  =\sum_{k=1}^{m}p_{k}\operatorname*{Tr}%
\left(  \rho_{k}M_{k}\right)  =1-P_{\text{fail}}\text{.}%
\end{equation}
The \textbf{optimal success rate} is%
\begin{equation}
P_{\text{succ}}^{\text{optimal}}=\sup_{\text{POVMs\ }\left\{  M_{k}\right\}
}P_{\text{succ}}\left(  M_{k}\right)  =1-P_{\text{fail}}^{\text{optimal}%
}\text{.}%
\end{equation}

\end{definition}

A common POVM is

\begin{definition}
The \textbf{Belavkin-Hausladen-Wootters \textquotedblleft pretty good
measurement\textquotedblright\ (PGM)}\footnote{The PGM for non-equiprobable
pure states appeared in 1975 as an optimal measurement under conditions of
equality along the diagonal of the Graham matrix \cite{Belavkin optimal
distinction of non-orthogonal quantum signals, Belavkin Optimal multiple
quantum statistical hypothesis testing}$,$ and reappeared in 1993 as an
approximately-optimal measurement \cite{HausladenThesis, HausWootPGM}.} is
given by%
\begin{equation}
M_{k}=\left(
{\displaystyle\sum_{\ell=1}^{m}}
p_{\ell}\rho_{\ell}\right)  ^{-1/2^{+}}p_{k}\rho_{k}\left(
{\displaystyle\sum_{\ell=1}^{m}}
p_{\ell}\rho_{\ell}\right)  ^{-1/2^{+}}\text{,} \label{Def of mixed state PGM}%
\end{equation}
where one defines%
\begin{equation}
A^{-1/2^{+}}=%
{\displaystyle\sum_{\lambda_{j}>0}}
\lambda_{j}^{-1/2}\left\vert \psi_{j}\right\rangle \left\langle \psi
_{j}\right\vert ,
\end{equation}
for a spectral decomposition $A=%
{\displaystyle\sum}
\lambda_{j}\left\vert \psi_{j}\right\rangle \left\langle \psi_{j}\right\vert $.
\end{definition}

Numerical evidence \cite{Jezek Rehacek and Fiurasek Finding optimal strategies
for minimum error quantum state discrimination, Hradil et al Maximum
Likelihood methods in quantum mechanics} suggests that the following sequence
of measurements converges to the optimal measurement:

\begin{definition}
The \textbf{Je\v{z}ek-\v{R}eh\'{a}\v{c}ek-Fiur\'{a}\v{s}ek iterative
measurements} $\left\{  M_{k}^{\left(  n\right)  }\right\}  _{k=1,\ldots,m},$
$n\in\mathbb{Z}^{+}$, are recursively defined by \cite{Jezek Rehacek and
Fiurasek Finding optimal strategies for minimum error quantum state
discrimination, Hradil et al Maximum Likelihood methods in quantum mechanics}%
\begin{align}
M_{k}^{\left(  0\right)  }  &  =%
\openone
/m\text{ for }m<\infty\text{, }M_{k}^{\left(  0\right)  }=%
\openone
\text{ for }m=\infty\\
M_{k}^{\left(  n\right)  }  &  =\left(
{\displaystyle\sum_{\ell=1}^{m}}
p_{\ell}^{2}\rho_{\ell}M_{\ell}^{\left(  n-1\right)  }\rho_{\ell}\right)
^{-1/2^{+}}p_{k}^{2}\rho_{k}M_{k}^{\left(  n-1\right)  }\rho_{k}\left(
{\displaystyle\sum_{\ell=1}^{m}}
p_{\ell}^{2}\rho_{\ell}M_{\ell}^{\left(  n-1\right)  }\rho_{\ell}\right)
^{-1/2^{+}}\text{.} \label{Jezek iterate recursive def}%
\end{align}
If $\mathcal{E}_{m}$ is a pure-state ensemble then \textbf{Holevo's
measurement} \cite{Holovo Assym Opt Hyp Test} is given by
\begin{equation}
M_{k}=\left\vert e_{k}^{\text{Holevo}}\right\rangle \left\langle
e_{k}^{\text{Holevo}}\right\vert ,
\label{Holevo's measurement with an M sub k}%
\end{equation}
where $e_{k}^{\text{Holevo}}$ is given by $\left(
\ref{Holevo measurement vector}\right)  $. In particular, Holevo's measurement
is the pure-state version of the first
Je\v{z}ek-\v{R}eh\'{a}\v{c}ek-Fiur\'{a}\v{s}ek iterate.
\end{definition}

\noindent\textbf{Remark: }$M_{k}^{\left(  0\right)  }$ is also a POVM for
$m<\infty$. Note that since the recursion formula $\left(
\ref{Jezek iterate recursive def}\right)  $ is invariant under rescalings
$M_{k}^{\left(  n-1\right)  }\rightarrow\lambda\times M_{k}^{\left(
n-1\right)  }$, the cases $m<\infty$ and $m=\infty$ are essentially the same
for $n>0$.

Holevo studied measurements which were asymptotically optimal in the following
precise sense:

\begin{definition}
A measurement procedure $G$ is a mapping from ensembles to corresponding
POVMs. It is \textbf{asymptotically optimal \cite{Holovo Assym Opt Hyp Test}
}for distinguishing pure-state ensembles if for fixed $p_{1},...,p_{m}$ one
has
\begin{equation}
\frac{P_{\text{fail}}^{\text{G}\left(  \mathcal{E}_{m}\right)  }\left(
\mathcal{E}_{m}\right)  }{P_{\text{fail}}^{\text{opt}}\left(  \mathcal{E}%
_{m}\right)  }\rightarrow1
\end{equation}
as the states $\psi_{k}$ of $\mathcal{E}_{m}$ approach an orthonormal
set.\footnote{It is presumably intractable to produce a closed-form
measurement process $G$ for which $P_{\text{fail}}^{G\left(  \mathcal{E}%
_{m}\right)  }\left(  \mathcal{E}_{m}\right)  /P_{\text{fail}}^{\text{optimal}%
}\left(  \mathcal{E}_{m}\right)  \rightarrow1$ as the $\psi_{k}$ and $p_{k}$
are arbitrarily varied in such a way that $P_{\text{fail}}^{\text{optimal}%
}\left(  \mathcal{E}_{m}\right)  \rightarrow0$. Otherwise, one could recover
the optimal measurement for a fixed ensemble $\mathcal{E}_{m}$ on
$\mathcal{H}$ by taking the $\lambda\rightarrow1^{-}$ limit of the ensemble
$\mathcal{E}_{m+1}^{\prime}\equiv\left\{  \left(  \psi_{k},\left(
1-\lambda\right)  p_{k}\right)  \right\}  \cup\left\{  \left(  \phi
,\lambda\right)  \right\}  $ on a dilation $\mathcal{H}^{\prime}%
\supset\mathcal{H}$, with $\phi\bot\mathcal{H}$.}
\end{definition}

\noindent Holevo showed that

\begin{theorem}
[Holevo's asymptotic-optimality Theorem (1977) \cite{Holovo Assym Opt Hyp
Test}]\label{Theorem Holevo's Assymtotic optimality theorem}Holevo's
measurement $\left(  \ref{Holevo measurement vector}\right)  $ is
asymptotically optimal. Furthermore, for fixed $\left\{  p_{k}\right\}  $ one
has
\begin{equation}
\frac{2\gamma_{\text{Holevo}}\left(  \mathcal{E}_{m}\right)  }{P_{\text{fail}%
}^{\text{opt}}\left(  \mathcal{E}_{m}\right)  }\rightarrow1
\end{equation}
as $\left\langle \psi_{i},\psi_{j}\right\rangle \rightarrow\delta_{ij}$, where
$\gamma_{\text{Holevo}}$ is given by $\left(  \ref{qholevo formula}\right)  $.
\end{theorem}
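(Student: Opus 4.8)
The plan is to combine the elementary two-sided estimate $\left(  \ref{holevo minimized both sides of this}\right)  $ with the variational fact, already noted above, that $C_{\text{Holevo}}$ is minimized over orthonormal bases by $\left(  \ref{Holevo measurement vector}\right)  $ with minimum value $2\gamma_{\text{Holevo}}$. The driving observation will be that in the orthonormal limit the measurement vectors of \emph{both} Holevo's measurement and the optimal measurement have overlap $\left\langle \psi_{k},e_{k}\right\rangle \rightarrow1$, so the factor $1+\left\langle \psi_{k},e_{k}\right\rangle $ appearing in $\left(  \ref{Holevo low level factor of two bound}\right)  $ is squeezed up to its maximal value $2$ and the distinguishability interval $\left[  \gamma_{\text{Holevo}},2\gamma_{\text{Holevo}}\right]  $ collapses to a point.

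First I would pass to the Gram-matrix picture. Writing $P=\sum_{k}\left\vert p_{k}\psi_{k}\right\rangle \left\langle f_{k}\right\vert $ for an auxiliary orthonormal index family $\left\{  f_{k}\right\}  $ and $G=P^{\dagger}P$ for the weighted Gram matrix $G_{jk}=p_{j}p_{k}\left\langle \psi_{j},\psi_{k}\right\rangle $, one has $\sum_{k}p_{k}^{2}\left\vert \psi_{k}\right\rangle \left\langle \psi_{k}\right\vert =PP^{\dagger}$, hence $\gamma_{\text{Holevo}}=1-\operatorname{Tr}G^{1/2}$, and the identity $P^{\dagger}g(PP^{\dagger})P=g(P^{\dagger}P)P^{\dagger}P$ gives $\left\langle \psi_{k},e_{k}^{\text{Holevo}}\right\rangle =p_{k}^{-1}[G^{1/2}]_{kk}$, which is automatically real and nonnegative since $G^{1/2}$ is positive semidefinite. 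As $\left\langle \psi_{i},\psi_{j}\right\rangle \rightarrow\delta_{ij}$ we have $G\rightarrow\operatorname{diag}(p_{k}^{2})$, so by continuity of $A\mapsto A^{1/2}$ on positive operators $G^{1/2}\rightarrow\operatorname{diag}(p_{k})$; therefore $\gamma_{\text{Holevo}}\rightarrow0$ and $\left\langle \psi_{k},e_{k}^{\text{Holevo}}\right\rangle \rightarrow1$ for each $k$. Near the limit the $\psi_{k}$ are linearly independent, so $G$ is positive definite and the above is legitimate; the degenerate case $\gamma_{\text{Holevo}}=0$ is trivial and may be excluded.

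Next I would pin down the optimal measurement. Linear independence allows me to take the optimal measurement to be rank-one projective, $\left\{  \left\vert e_{k}^{\text{opt}}\right\rangle \left\langle e_{k}^{\text{opt}}\right\vert \right\}  $, and after rephasing the $e_{k}^{\text{opt}}$ (which changes neither the POVM nor $P_{\text{fail}}$) I may assume $\left\langle \psi_{k},e_{k}^{\text{opt}}\right\rangle \geq0$. Since $P_{\text{fail}}^{\text{opt}}\leq2\gamma_{\text{Holevo}}\rightarrow0$ by $\left(  \ref{Holevo pure state two-sided bound}\right)  $ and the $p_{k}$ are fixed and positive, each $1-\left\vert \left\langle \psi_{k},e_{k}^{\text{opt}}\right\rangle \right\vert ^{2}\rightarrow0$, so $\varepsilon_{m}:=\min_{k}\left\langle \psi_{k},e_{k}^{\text{opt}}\right\rangle \rightarrow1$. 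Applying $\left(  \ref{Holevo low level factor of two bound}\right)  $ to each term of $P_{\text{fail}}^{\text{opt}}=P_{\text{fail}}(\{e_{k}^{\text{opt}}\})$ and then using the minimality of $C_{\text{Holevo}}$ gives
\begin{equation}
P_{\text{fail}}^{\text{opt}}=\sum_{k}p_{k}\,\frac{\left\Vert e_{k}^{\text{opt}}-\psi_{k}\right\Vert ^{2}}{2}\left(  1+\left\langle \psi_{k},e_{k}^{\text{opt}}\right\rangle \right)  \geq\tfrac{1}{2}(1+\varepsilon_{m})\,C_{\text{Holevo}}(\{e_{k}^{\text{opt}}\})\geq(1+\varepsilon_{m})\,\gamma_{\text{Holevo}},
\end{equation}
while the minimality of $\left(  \ref{Holevo measurement vector}\right)  $ together with $\left(  \ref{holevo minimized both sides of this}\right)  $ yields $P_{\text{fail}}^{\text{opt}}\leq P_{\text{fail}}(\{e_{k}^{\text{Holevo}}\})\leq C_{\text{Holevo}}(\{e_{k}^{\text{Holevo}}\})=2\gamma_{\text{Holevo}}$. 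Chaining the two estimates,
\begin{equation}
1\leq\frac{2\gamma_{\text{Holevo}}}{P_{\text{fail}}^{\text{opt}}}\leq\frac{2}{1+\varepsilon_{m}}\longrightarrow1,\qquad 1\leq\frac{P_{\text{fail}}(\{e_{k}^{\text{Holevo}}\})}{P_{\text{fail}}^{\text{opt}}}\leq\frac{2\gamma_{\text{Holevo}}}{(1+\varepsilon_{m})\gamma_{\text{Holevo}}}=\frac{2}{1+\varepsilon_{m}}\longrightarrow1,
\end{equation}
which are precisely the asymptotic optimality of Holevo's measurement and the claimed limit $2\gamma_{\text{Holevo}}/P_{\text{fail}}^{\text{opt}}\rightarrow1$.

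The only step that requires genuine care is the limiting behaviour $\left\langle \psi_{k},e_{k}^{\text{Holevo}}\right\rangle \rightarrow1$; via the identity $\left\langle \psi_{k},e_{k}^{\text{Holevo}}\right\rangle =p_{k}^{-1}[G^{1/2}]_{kk}$ this reduces to continuity of the operator square root at the positive-definite matrix $\operatorname{diag}(p_{k}^{2})$, so even this is routine. Everything else is bookkeeping with the pointwise identity $\left(  \ref{Holevo low level factor of two bound}\right)  $, the characterization $\min C_{\text{Holevo}}=2\gamma_{\text{Holevo}}$, and the optimality of rank-one projective measurements for linearly independent pure states.
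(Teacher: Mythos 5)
The paper itself gives no proof of this theorem: it is quoted from Holevo's 1978 paper with a citation, so there is no in-text argument to compare yours against. Taken on its own, your proof is correct and complete, and it is the natural completion of the sketch the paper gives in Section II. The two halves are sound: (i) the Gram-matrix identity $\langle \psi_k, e_k^{\text{Holevo}}\rangle = p_k^{-1}[G^{1/2}]_{kk}$ is right (it is the statement $P^{\dagger}(PP^{\dagger})^{-1/2}P=(P^{\dagger}P)^{1/2}$ read in the $\{f_k\}$ basis), and continuity of the matrix square root at $\mathrm{diag}(p_k^2)$ gives $\gamma_{\text{Holevo}}\rightarrow 0$ together with $\langle\psi_k,e_k^{\text{Holevo}}\rangle\rightarrow 1$; (ii) the squeeze $P_{\text{fail}}^{\text{opt}}\geq(1+\varepsilon_m)\gamma_{\text{Holevo}}$ correctly exploits that the pointwise factor $\tfrac12(1+\langle\psi_k,e_k\rangle)$ in the identity preceding $(\ref{holevo minimized both sides of this})$ is pinched up to $1$ for the optimal basis as well, because $P_{\text{fail}}^{\text{opt}}\leq 2\gamma_{\text{Holevo}}\rightarrow 0$ forces every overlap $\langle\psi_k,e_k^{\text{opt}}\rangle\rightarrow 1$ (here fixed, strictly positive $p_k$ and finite $m$ are essential, and both are part of the hypothesis). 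Combined with $P_{\text{fail}}^{\text{opt}}\leq P_{\text{fail}}(\{e_k^{\text{Holevo}}\})\leq 2\gamma_{\text{Holevo}}$ this yields both limits at once. Two points you lean on deserve explicit acknowledgement, and you do acknowledge both: the restriction to rank-one projective measurements for linearly independent pure states (Kennedy's theorem, which the paper also invokes in a footnote and which is legitimate here since $G$ is invertible near the orthonormal limit), and the exclusion of the degenerate case $\gamma_{\text{Holevo}}=0$, which near the limit can only occur at exact orthonormality, where the ratio is $0/0$ and the statement is vacuous. This is, in essence, the same mechanism Holevo's original proof uses — asymptotic optimality comes from the collapse of the interval $[\tfrac12,1]\times C_{\text{Holevo}}$ as the overlaps tend to $1$ — so your write-up can stand as a self-contained proof within the framework of Section II.
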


\noindent A converse was proven in \cite{prequel}.

The following norms will be used:

\begin{definition}
Let $\mathcal{H}$ and $\mathcal{K}$ be Hilbert spaces, and let $A:\mathcal{H}%
\rightarrow\mathcal{K}$ be a bounded linear operator. The \textbf{absolute
value }is $\left\vert A\right\vert =\sqrt{A^{\dag}A}$. The \textbf{trace norm
}is $\left\Vert A\right\Vert _{1}=\operatorname*{Tr}\left\vert A\right\vert $.
The \textbf{Frobenius norm} is $\left\Vert A\right\Vert _{2}=\sqrt
{\operatorname*{Tr}A^{\dag}A}$. The \textbf{operator norm} is given by%
\begin{equation}
\left\Vert A\right\Vert =\sup_{0\neq\psi\in\mathcal{H}}\frac{\left\Vert
A\psi\right\Vert }{\left\Vert \psi\right\Vert }\text{.}%
\end{equation}
$A$ is an \textbf{isometry }if $A^{\dag}A=%
\openone
$.
\end{definition}

It will be assumed that the reader is familiar with the following properties
of the trace-norm, which may be found in \cite{Reed and Simon I}:

\begin{enumerate}
\item $\left\vert \operatorname*{Tr}A\right\vert \leq\left\Vert A\right\Vert
_{1}=\left\Vert A^{\dag}\right\Vert _{1}$

\item $\left\Vert WA\right\Vert _{1}\leq\left\Vert W\right\Vert \times
\left\Vert A\right\Vert _{1}$

\item If $\dim\mathcal{K}\geq\dim\mathcal{H}$ then
\begin{equation}
\sup_{\text{isometries }U:\mathcal{H}\rightarrow\mathcal{K}}\operatorname{Re}%
\left(  \operatorname*{Tr}A^{\dag}U\right)  =\left\Vert A\right\Vert _{1},
\label{equation sup trace A dag U}%
\end{equation}
where $U$ is a maximizer iff%
\begin{equation}
\left.  U\right\vert _{\operatorname{Ran}\left(  A^{\dag}A\right)  }=A\left(
A^{\dag}A\right)  ^{-1/2^{+}}\text{.\label{unitary maximizer equation}}%
\end{equation}

\end{enumerate}

\noindent\textbf{Note:} Property 3 is a simple consequence of the
singular-value decomposition.

\newpage

\section{Mixed-state distinguishability bounds using Holevo's
method\label{Section mixed state distinguishibility results}}

The first step in constructing a mixed-state version of the argument of
section $\ref{section Holevo's pure state bound}$ is to construct a
mixed-state version of the underlying estimate $\left(
\ref{Holevo low level factor of two bound}\right)  $:

\begin{lemma}
\label{lemma containing mixed state holevo proto estimate}Let $\rho$ be a
density matrix on $\mathcal{H}$ and let $E:\mathcal{H}\rightarrow\mathcal{H}$
be an operator with $\left\Vert E\right\Vert \leq1$. Then
\begin{equation}
1-\operatorname*{Tr}\left(  E^{^{\dagger}}E\rho\right)  \in\left[  1,2\right]
\times\left(  1-\left\Vert E\rho\right\Vert _{1}\right)
\label{eq mixed failure probability estimate}%
\end{equation}

\end{lemma}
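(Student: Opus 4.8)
The plan is to split the bracketed estimate into its two halves: the lower endpoint $1-\operatorname{Tr}(E^{\dagger}E\rho)\geq 1-\|E\rho\|_{1}$, equivalently $\operatorname{Tr}(E^{\dagger}E\rho)\leq\|E\rho\|_{1}$, and the upper endpoint $1-\operatorname{Tr}(E^{\dagger}E\rho)\leq 2(1-\|E\rho\|_{1})$, equivalently $\operatorname{Tr}(E^{\dagger}E\rho)\geq 2\|E\rho\|_{1}-1$. Before doing either, I would record that $\operatorname{Tr}(E^{\dagger}E\rho)=\operatorname{Tr}(\rho^{1/2}E^{\dagger}E\rho^{1/2})=\|E\rho^{1/2}\|_{2}^{2}$ is real and nonnegative, and that $\|E\rho\|_{1}\leq\|E\|\,\|\rho\|_{1}\leq 1$, so both endpoints of the claimed interval are nonnegative and the notation $[1,2]\times(1-\|E\rho\|_{1})$ is consistent.

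For the lower endpoint I would write $\operatorname{Tr}(E^{\dagger}E\rho)=\operatorname{Tr}(E^{\dagger}\cdot E\rho)$ and apply the elementary trace inequality $|\operatorname{Tr}(XY)|\leq\|XY\|_{1}\leq\|X\|\,\|Y\|_{1}$ (properties 1 and 2 of the trace norm) with $X=E^{\dagger}$ and $Y=E\rho$; since $\|E^{\dagger}\|=\|E\|\leq 1$ and the trace in question is nonnegative, this immediately gives $\operatorname{Tr}(E^{\dagger}E\rho)\leq\|E\rho\|_{1}$.

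The upper endpoint is the mixed-state analogue of Holevo's pure-state estimate $(\ref{Holevo low level factor of two bound})$, with a ``completed square'' playing the role of $\|e_{k}-\psi_{k}\|^{2}$. For any isometry $U:\mathcal{H}\rightarrow\mathcal{H}$ the operator $(E-U)^{\dagger}(E-U)$ is positive semidefinite, so $0\leq\operatorname{Tr}\big((E-U)^{\dagger}(E-U)\rho\big)$. Expanding the right side, using $U^{\dagger}U=\openone$ so that the term involving $U^{\dagger}U$ contributes $\operatorname{Tr}\rho=1$, and combining the two cross terms via $\operatorname{Tr}(E^{\dagger}U\rho)+\operatorname{Tr}(U^{\dagger}E\rho)=2\operatorname{Re}\operatorname{Tr}(\rho E^{\dagger}U)$, I obtain $\operatorname{Tr}(E^{\dagger}E\rho)+1\geq 2\operatorname{Re}\operatorname{Tr}(\rho E^{\dagger}U)$. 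Taking the supremum over isometries $U$ and invoking property 3 of the trace norm in the form $(\ref{equation sup trace A dag U})$ with $A=E\rho$ (whose adjoint is $\rho E^{\dagger}$) yields $\operatorname{Tr}(E^{\dagger}E\rho)+1\geq 2\|E\rho\|_{1}$, which is the remaining inequality.

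I do not anticipate a serious obstacle: the only point requiring a moment's care is that in infinite dimensions the supremum in property 3 may fail to be attained, but since $\operatorname{Tr}(E^{\dagger}E\rho)+1\geq 2\operatorname{Re}\operatorname{Tr}(\rho E^{\dagger}U)$ holds for \emph{every} isometry $U$, passing to the supremum is still legitimate (and trace-class-ness of $(E-U)^{\dagger}(E-U)\rho$ is automatic since $\rho$ is trace class). The one idea that must be found rather than merely verified is the completion-of-the-square device $\operatorname{Tr}\big((E-U)^{\dagger}(E-U)\rho\big)\geq 0$ — recognizing this as the correct mixed-state surrogate for Holevo's $\|e_{k}-\psi_{k}\|^{2}\geq 0$, with the isometry $U$ standing in for the measurement vector $e_{k}$.
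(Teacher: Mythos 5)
Your proof is correct. The lower bound is argued exactly as in the paper (properties 1 and 2 of the trace norm applied to $E^{\dagger}\cdot E\rho$). For the upper bound you take a genuinely different, though closely related, route: you expand $\operatorname{Tr}\bigl((E-U)^{\dagger}(E-U)\rho\bigr)\geq 0$ over isometries $U$ to get $\operatorname{Tr}(E^{\dagger}E\rho)\geq 2\operatorname{Re}\operatorname{Tr}(\rho E^{\dagger}U)-1$ and then invoke $(\ref{equation sup trace A dag U})$, whereas the paper introduces the pre-inner product $\left\langle E,F\right\rangle _{\rho}=\operatorname*{Tr}(E^{\dagger}F\rho)$ and applies Cauchy--Schwarz against a unitary $U$ to obtain the quadratic bound $\operatorname{Tr}(E^{\dagger}E\rho)\geq\left\Vert E\rho\right\Vert _{1}^{2}$, which it then linearizes via $1-x^{2}=(1+x)(1-x)\leq 2(1-x)$. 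The two are the standard two faces of the same inner-product argument (your completion of squares is the $2ab\leq a^{2}+b^{2}$ form of the paper's Cauchy--Schwarz), and since $x^{2}\geq 2x-1$ the paper's intermediate inequality is strictly stronger than yours, foreshadowing the sharper quadratic bound $\Gamma(2-\Gamma)$ of Theorem \ref{Theorem result Barnum Knill Proof Montanaro}; on the other hand your version is more elementary and is the more literal transcription of Holevo's pure-state step $(\ref{Holevo low level factor of two bound})$, with the isometry $U$ replacing the measurement vector $e_{k}$. Your remarks on well-posedness (nonnegativity of both endpoints, the supremum in property 3 not needing to be attained) are correct and harmless.
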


\begin{proof}
The lower bound follows from the properties of the trace-norm:%
\[
1-\operatorname*{Tr}\left(  E^{\dag}E\rho\right)  \geq1-\left\Vert E^{\dag
}E\rho\right\Vert _{1}\geq1-\left\Vert E\right\Vert \times\left\Vert
E\rho\right\Vert _{1}\geq1-\left\Vert E\rho\right\Vert _{1}%
\]
To prove the upper bound, define the pre-inner product on the bounded
operators on $\mathcal{H}$ by%
\[
\left\langle E,F\right\rangle _{\rho}=\operatorname*{Tr}_{\mathcal{H}}\left(
E^{\dag}F\rho\right)  \text{.}%
\]
By Bessel's inequality%
\[
\operatorname*{Tr}\left(  E^{\dag}E\rho\right)  =\left\Vert E\right\Vert
_{\rho}^{2}\geq\sup_{U\text{ unitary}}\frac{\left\vert \left\langle
U,E\right\rangle _{\rho}\right\vert ^{2}}{\left\Vert U\right\Vert _{\rho}^{2}%
}=\sup\frac{\left\vert \operatorname*{Tr}U^{\dag}E\rho\right\vert ^{2}%
}{\left(  \operatorname*{Tr}\rho\right)  ^{2}}=\left\Vert E\rho\right\Vert
_{1}^{2}\text{.}%
\]
Subtracting both sides from $1$,%
\[
1-\operatorname*{Tr}\left(  E^{^{\dagger}}E\rho\right)  \leq\left(
1+\left\Vert E\rho\right\Vert _{1}\right)  \left(  1-\left\Vert E\rho
\right\Vert _{1}\right)  \leq2\left(  1-\left\Vert E\rho\right\Vert
_{1}\right)  \text{.}%
\]

\end{proof}

To find the measurement properly analogous to $\left(
\ref{Holevo measurement vector}\right)  ,$ one simply needs to minimize the
cost function arising from $\left(
\ref{eq mixed failure probability estimate}\right)  $:

\begin{theorem}
\label{Theorem mixed state Holevo interval}Let $M_{k}=E_{k}^{\dag}E_{k}$ be a
POVM on $\operatorname*{Span}\left(  \mathcal{E}_{m}\right)  $ minimizing the
approximate cost function%
\begin{equation}
C\left(  \left\{  E_{k}\right\}  \right)  =%
{\displaystyle\sum_{k}}
p_{k}\left(  1-\left\Vert E_{k}\rho_{k}\right\Vert _{1}\right)  \text{.}
\label{Approximate cost function}%
\end{equation}
Then $M_{k}$ is the first Je\v{z}ek-\v{R}eh\'{a}\v{c}ek-Fiur\'{a}\v{s}ek
iterate $\left(  \ref{Jezek iterate recursive def}\right)  $
\begin{equation}
M_{k}=\left(
{\displaystyle\sum}
p_{\ell}^{2}\rho_{\ell}^{2}\right)  ^{-1/2^{+}}p_{k}^{2}\rho_{k}^{2}\left(
{\displaystyle\sum}
p_{\ell}^{2}\rho_{\ell}^{2}\right)  ^{-1/2^{+}}\text{,}
\label{Equation for generalized holevo measurement in Theorem statement}%
\end{equation}
and%
\begin{equation}
\Gamma\leq P_{\text{fail}}^{\text{optimal}}\leq P_{\text{fail}}\left(
\left\{  M_{k}\right\}  \right)  \leq2\Gamma
\text{,\label{equation nice mixed state holevo curlander estimate in theorem statement}%
}%
\end{equation}
where%
\begin{equation}
\Gamma=\Gamma\left(  \mathcal{E}_{m}\right)  =\min_{\left\{  E_{k}\right\}
}C\left(  \left\{  E_{k}\right\}  \right)  =1-\operatorname*{Tr}\sqrt
{\sum_{k=1}^{m}p_{k}^{2}\rho_{k}^{2}}\in\left[  0,1\right)  \text{.}
\label{Formula for capital gamma}%
\end{equation}

\end{theorem}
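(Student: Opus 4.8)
The plan is to run the mixed-state analogue of Holevo's pure-state argument using Lemma~\ref{lemma containing mixed state holevo proto estimate} in place of the elementary estimate $\left(\ref{Holevo low level factor of two bound}\right)$. First I would note that any POVM $\{M_k\}$ on $\operatorname*{Span}(\mathcal{E}_m)$ can be written as $M_k=E_k^{\dag}E_k$ (for instance $E_k=\sqrt{M_k}$, but more usefully $E_k$ may be any operator with $E_k^{\dag}E_k=M_k$), so that $P_{\text{fail}}(\{M_k\})=\sum_k p_k(1-\operatorname*{Tr}(E_k^{\dag}E_k\rho_k))$. Since $\sum_k M_k=\openone$ on the span forces $\Vert M_k\Vert\le 1$ and hence $\Vert E_k\Vert\le 1$, Lemma~\ref{lemma containing mixed state holevo proto estimate} applies term-by-term, giving
\begin{equation}
P_{\text{fail}}(\{M_k\})\in[1,2]\times C(\{E_k\})
\end{equation}
for the associated $\{E_k\}$. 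This immediately yields $P_{\text{fail}}^{\text{optimal}}\ge \inf_{\text{POVM}} C(\{E_k\})$ and, applied to the minimizing $\{M_k\}$, the inequality $P_{\text{fail}}(\{M_k\})\le 2\,\Gamma$; the middle inequality $P_{\text{fail}}^{\text{optimal}}\le P_{\text{fail}}(\{M_k\})$ is trivial. The one subtlety here is that $\Gamma$ is defined as a minimum over \emph{factorizations} $\{E_k\}$ of POVMs rather than over POVMs directly, so I would remark that $C(\{E_k\})$ depends on $E_k$ only through $M_k=E_k^{\dag}E_k$ up to the quantity $\Vert E_k\rho_k\Vert_1$ — and in fact $\Vert E_k\rho_k\Vert_1$ is maximized over factorizations of a fixed $M_k$ precisely when one is free to choose the polar phase, so the infimum over factorizations equals the infimum over POVMs. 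I will make sure the minimum is attained (compactness of the POVM set in finite dimensions, or a limiting argument for $m=\infty$ using the rescaling-invariance remark).

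The main computational step is the explicit minimization of $C(\{E_k\})$, i.e. identifying the minimizer and the value $\Gamma$. I would introduce the block/dilation trick: collect the $E_k$ into a single operator and use property~3 of the trace norm, $\left(\ref{equation sup trace A dag U}\right)$. Concretely, $\Vert E_k\rho_k\Vert_1 = \sup_{U_k\text{ isometry}}\operatorname{Re}\operatorname*{Tr}(\rho_k E_k^{\dag}U_k)$, so maximizing $\sum_k p_k\Vert E_k\rho_k\Vert_1$ over POVMs $\{E_k^{\dag}E_k\}$ with $\sum_k E_k^{\dag}E_k=\openone$ becomes a joint optimization over the $E_k$ and the auxiliary isometries $U_k$. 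Assembling $E=\sum_k |k\rangle\otimes E_k$ (a map into $\mathbb{C}^m\otimes\mathcal{H}$) the constraint is exactly that $E$ is an isometry, and the objective is $\operatorname{Re}\operatorname*{Tr}(E^{\dag} V)$ for an appropriate operator $V$ built from $\rho_k$, $p_k$ and the $U_k$; by property~3 this is $\Vert V\Vert_1$, achieved at $E=V(V^{\dag}V)^{-1/2^+}$. Optimizing the inner $U_k$ as well, a short calculation collapses everything to $\operatorname*{Tr}\sqrt{\sum_k p_k^2\rho_k^2}$, giving $\Gamma = 1-\operatorname*{Tr}\sqrt{\sum_k p_k^2\rho_k^2}$ and $E_k = (\sum_\ell p_\ell^2\rho_\ell^2)^{-1/2^+}p_k\rho_k$, whence $M_k=E_k^{\dag}E_k$ is $\left(\ref{Equation for generalized holevo measurement in Theorem statement}\right)$. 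Comparing with $\left(\ref{Jezek iterate recursive def}\right)$ at $n=1$ with $M_k^{(0)}\propto\openone$ (using the rescaling invariance noted in the Remark) shows $M_k$ is the first Je\v{z}ek-\v{R}eh\'{a}\v{c}ek-Fiur\'{a}\v{s}ek iterate. Finally $\Gamma\in[0,1)$ because $\sum_k p_k^2\rho_k^2\le (\sum_k p_k\rho_k)^2 \le \sum_k p_k\rho_k$ has positive trace, keeping $\operatorname*{Tr}\sqrt{\cdot}\in(0,1]$.

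I expect the main obstacle to be the bookkeeping in the joint optimization over the $\{E_k\}$ and the phases — in particular checking that the auxiliary isometries $U_k$ can be absorbed so cleanly and that the resulting $E=V(V^{\dag}V)^{-1/2^+}$ does produce a genuine POVM summing to $\openone$ on $\operatorname*{Span}(\mathcal{E}_m)$ (rather than on a smaller subspace), which requires that $\operatorname{Ran}(\sum_k p_k^2\rho_k^2) = \operatorname*{Span}(\mathcal{E}_m)$ — true since $p_k\rho_k$ and $p_k^2\rho_k^2$ have the same range. A secondary, purely technical, obstacle is justifying the interchange of suprema and the attainment of the minimum when $m=\infty$; I would handle this by the rescaling-invariance observation together with monotone/dominated convergence on the trace-norm expressions, or simply note the finite-rank truncation argument suffices since $\sum_k p_k^2\rho_k^2$ is trace class.
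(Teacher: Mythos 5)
Your proposal is correct and takes essentially the same route as the paper: Lemma~\ref{lemma containing mixed state holevo proto estimate} applied term-by-term, then the block/dilation trick plus property~3 of the trace norm to minimize $C$; the paper fixes the phase freedom by imposing $E_{k}\rho_{k}\geq0$ via polar decomposition, which is equivalent to your auxiliary isometries $U_{k}$ (these indeed drop out, since $V^{\dagger}V=\sum_{k}p_{k}^{2}\rho_{k}^{2}$ regardless of the $U_{k}$). One trivial bookkeeping slip: with your convention $E_{k}=\bigl(\sum_{\ell}p_{\ell}^{2}\rho_{\ell}^{2}\bigr)^{-1/2^{+}}p_{k}\rho_{k}$ the POVM element giving $\left(\ref{Equation for generalized holevo measurement in Theorem statement}\right)$ is $E_{k}E_{k}^{\dagger}$ rather than $E_{k}^{\dagger}E_{k}$; the paper's minimizer is the adjoint, $E_{k}^{\min}=p_{k}\rho_{k}\bigl(\sum_{\ell}p_{\ell}^{2}\rho_{\ell}^{2}\bigr)^{-1/2^{+}}$.
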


\noindent\textbf{Remark: }The approximate cost function $\left(
\ref{Approximate cost function}\right)  $ is a somewhat-disguised modification
of the minimal principle of Concha and Poor \cite{Concha Thesis, Concha and
Poor mixed optimal in lease squares sense, Concha and Poor book chapter on
Advances in quantum detection}, which was reverse-engineered to reproduce the
\textit{ad-hoc} mixed-state PGM. The measurement $\left(
\ref{Equation for generalized holevo measurement in Theorem statement}\right)
$ is an example of a mixed-state Belavkin weighted measurement. (See section
2.2 of \cite{Belavkin Book}.) A discussion of the relative merits of various
weightings, including the cubic weighting of \cite{Wehner thesis, Wehner State
discrimination with post-measurement information}, may be found in
\cite{prequel}.\smallskip

\begin{proof}
By lemma $\ref{lemma containing mixed state holevo proto estimate}$,
$P_{\text{fail}}\left(  \left\{  E_{k}^{\dag}E_{k}\right\}  \right)
\in\left[  1,2\right]  \times C\left(  \left\{  E_{k}\right\}  \right)  $ for
all POVMs $M_{k}=E_{k}^{\dag}E_{k}$. Hence all that is required to get a
factor-of-two estimate of $P_{\text{fail}}^{\text{opt}}\left(  \mathcal{E}%
_{m}\right)  $ is to minimize $C$ subject to the constraint $%
{\displaystyle\sum}
E_{k}^{\dag}E_{k}=%
\openone
_{\operatorname*{Span}\left(  \mathcal{E}_{m}\right)  }$. Note that the
replacement $E_{k}\rightarrow W_{k}E_{k}$ for unitary $W_{k}$ does not alter
$C\left(  \left\{  E_{k}\right\}  \right)  $ or the quantities $E_{k}^{\dag
}E_{k}$. Hence the polar decomposition allows imposition of the additional
constraint $E_{k}\rho_{k}\geq0$, giving the expression%
\[
C\left(  \left\{  E_{k}\right\}  \right)  =\tilde{C}\left(  U\right)
=1-\operatorname*{Tr}V^{\dag}U,
\]
where $U,V:\operatorname*{Span}\left(  \mathcal{E}_{m}\right)  \rightarrow
\operatorname*{Span}\left(  \mathcal{E}_{m}\right)  \otimes\mathbb{C}^{m}$ are
defined by%
\begin{align*}
U\psi &  =\sum_{k=1}^{m}\left\vert E_{k}\psi\right\rangle \otimes\left\vert
k\right\rangle _{\mathbb{C}^{m}}\\
V\psi &  =\sum_{k=1}^{m}\left\vert p_{k}\rho_{k}\psi\right\rangle
\otimes\left\vert k\right\rangle _{\mathbb{C}^{n}}\text{.}%
\end{align*}
Here $\left\vert k\right\rangle _{\mathbb{C}^{m}}$ is the standard basis of
$\mathbb{C}^{m}$. Note that $U$ is an isometry iff $M_{k}=E_{k}^{\dag}E_{k}$
is a POVM on $\operatorname*{Span}\left(  \mathcal{E}_{m}\right)  $. By
equation $\ref{equation sup trace A dag U}$,%
\[
\min_{\text{isometries }U}\tilde{C}\left(  U\right)  =1-\left\Vert
V\right\Vert _{1}=1-\operatorname*{Tr}\sqrt{\sum_{k=1}^{m}p_{k}^{2}\rho
_{k}^{2}},
\]
with minimizer%
\[
U=V\left(  V^{\dag}V\right)  ^{-1/2^{+}}.
\]
This gives%
\[
E_{k}^{\text{min}}=\left\langle k\right\vert _{\mathbb{C}^{m}}U=p_{k}\rho
_{k}\left(  \sum_{\ell=1}^{m}p_{\ell}^{2}\rho_{\ell}^{2}\right)  ^{-1/2^{+}%
}\text{.}%
\]
Since $E_{k}^{\text{min}}\rho_{k}\geq0$, the theorem follows.
\end{proof}

\newpage

\section{Generalization of Curlander's upper
bound\label{Section Curlander's upper bound}}

The upper bound of $\left(
\ref{equation nice mixed state holevo curlander estimate in theorem statement}%
\right)  $ may be sharpened by combining Holevo's measurement $\left(
\ref{Holevo measurement vector}\right)  $ with Curlander's argument of Ref.
\cite{Curlander thesis MIT}\textit{:}

\begin{theorem}
\label{Theorem result Barnum Knill Proof Montanaro}The optimal failure rate
for distinguishing the arbitrary mixed-state ensemble $\mathcal{E}%
_{m}=\left\{  \left(  \rho_{k},p_{k}\right)  \right\}  _{k=1,\ldots,m}$
satisfies%
\begin{equation}
\Gamma\leq P_{\text{fail}}^{\text{opt}}\leq P_{\text{fail}}^{\text{HJRF}}%
\leq\Gamma\left(  2-\Gamma\right)  \leq2\Gamma
\text{,\label{bound curlander-holevo interval}}%
\end{equation}
where $P_{\text{fail}}^{\text{HJRF}}$ is the failure rate of the measurement
$\left(
\ref{Equation for generalized holevo measurement in Theorem statement}\right)
$ and $\Gamma=\Gamma\left(  \mathcal{E}_{m}\right)  $ is given by $\left(
\ref{Formula for capital gamma}\right)  $. Furthermore,
\begin{equation}
P_{\text{fail}}^{\text{opt}}\leq P_{\text{fail}}^{\text{HJRF}}\leq\left(
1+P_{\text{succ}}^{\text{opt}}\right)  P_{\text{fail}}^{\text{opt}}\text{.}
\label{Barnum-knill bound for holevo weighting}%
\end{equation}

\end{theorem}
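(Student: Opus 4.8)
The plan is to compute the failure rate of the measurement $\left(\ref{Equation for generalized holevo measurement in Theorem statement}\right)$ directly, using the \emph{sharp} form of the estimate obtained in the course of proving Lemma $\ref{lemma containing mixed state holevo proto estimate}$ together with a convexity argument; this is precisely Curlander's trick, carried over to the quadratically-weighted mixed-state setting. Write $M_k=(E_k^{\min})^{\dagger}E_k^{\min}$ with $E_k^{\min}=p_k\rho_k\bigl(\sum_{\ell}p_{\ell}^2\rho_{\ell}^2\bigr)^{-1/2^+}$ as in the proof of Theorem $\ref{Theorem mixed state Holevo interval}$, and set $c_k:=\|E_k^{\min}\rho_k\|_1$. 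Since $E_k^{\min}\rho_k\ge 0$ (noted at the end of that proof), $c_k=\operatorname{Tr}(E_k^{\min}\rho_k)=\operatorname{Tr}(M_k\rho_k)$; from the POVM constraint $\|E_k^{\min}\|\le 1$, so $c_k\in[0,1]$; and $\sum_k p_k(1-c_k)=C(\{E_k^{\min}\})=\Gamma$, i.e.\ $\sum_k p_k c_k=1-\Gamma$.

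First I would apply the estimate $1-\operatorname{Tr}(E^{\dagger}E\rho)\le(1+\|E\rho\|_1)(1-\|E\rho\|_1)$ from the proof of Lemma $\ref{lemma containing mixed state holevo proto estimate}$, with $E=E_k^{\min}$ and $\rho=\rho_k$, obtaining $1-\operatorname{Tr}(M_k\rho_k)\le 1-c_k^2$. Averaging over $k$ with the weights $p_k$ and using $\sum_k p_k=1$ gives $P_{\text{fail}}^{\text{HJRF}}\le 1-\sum_k p_k c_k^2$. Because $x\mapsto x^2$ is convex and $\{p_k\}$ is a probability vector, Jensen's inequality yields $\sum_k p_k c_k^2\ge\bigl(\sum_k p_k c_k\bigr)^2=(1-\Gamma)^2$, hence $P_{\text{fail}}^{\text{HJRF}}\le 1-(1-\Gamma)^2=\Gamma(2-\Gamma)$. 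The inequality $\Gamma(2-\Gamma)\le 2\Gamma$ is immediate from $\Gamma\ge 0$, and the two leftmost inequalities $\Gamma\le P_{\text{fail}}^{\text{opt}}\le P_{\text{fail}}^{\text{HJRF}}$ are already contained in Theorem $\ref{Theorem mixed state Holevo interval}$. This establishes $\left(\ref{bound curlander-holevo interval}\right)$.

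For $\left(\ref{Barnum-knill bound for holevo weighting}\right)$ I would note that $g(t):=t(2-t)=1-(1-t)^2$ is nondecreasing on $[0,1]$, that $\Gamma\le P_{\text{fail}}^{\text{opt}}\le 1$, and therefore $\Gamma(2-\Gamma)=g(\Gamma)\le g(P_{\text{fail}}^{\text{opt}})=P_{\text{fail}}^{\text{opt}}(2-P_{\text{fail}}^{\text{opt}})=(1+P_{\text{succ}}^{\text{opt}})P_{\text{fail}}^{\text{opt}}$, using $P_{\text{succ}}^{\text{opt}}=1-P_{\text{fail}}^{\text{opt}}$. Chaining this with the bound of the previous paragraph gives $\left(\ref{Barnum-knill bound for holevo weighting}\right)$; this is the quadratically-weighted analogue of the Barnum--Knill estimate as sharpened by Montanaro.

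The computation is largely routine; the single point requiring a little care is the bookkeeping of the first paragraph — confirming that the \emph{specific} minimizer $E_k^{\min}$ (not merely some minimizer obtained after the unitary adjustment $E_k\mapsto W_kE_k$ exploited in the proof of Theorem $\ref{Theorem mixed state Holevo interval}$) satisfies $E_k^{\min}\rho_k\ge 0$, so that $c_k=\|E_k^{\min}\rho_k\|_1$ and the sharp quadratic estimate of the lemma may be applied term by term, and that the numbers $c_k$ genuinely lie in $[0,1]$ so that Jensen's inequality points in the required direction.
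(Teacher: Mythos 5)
Your argument is correct, but for the mixed-state case it takes a genuinely different route from the paper's. The paper first proves $P_{\text{fail}}^{\text{HJRF}}\leq\Gamma\left(  2-\Gamma\right)  $ for \emph{pure} states, by applying Jensen's inequality to the overlaps $\left\langle e_{k}^{\text{Holevo}},\psi_{k}\right\rangle $, and then handles mixed states by passing to the eigenvector (``syndrome'') ensemble $\mathcal{E}_{m}^{\ast}$ of $\left(  \ref{syndrome ensemble}\right)  $, coarse-graining its HJRF measurement back to the HJRF measurement for $\mathcal{E}_{m}$, and invoking the identity $\Gamma\left(  \mathcal{E}_{m}^{\ast}\right)  =\Gamma\left(  \mathcal{E}_{m}\right)  $. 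You instead stay at the mixed-state level throughout: you apply the sharp quadratic form $\operatorname*{Tr}\left(  E^{\dag}E\rho\right)  \geq\left\Vert E\rho\right\Vert _{1}^{2}$ of the Bessel step inside Lemma \ref{lemma containing mixed state holevo proto estimate} term by term to $E_{k}^{\min}$, and then run Jensen on the scalars $c_{k}=\left\Vert E_{k}^{\min}\rho_{k}\right\Vert _{1}$ using $\sum_{k}p_{k}c_{k}=1-\Gamma$. This is shorter and avoids the spectral-decomposition detour; what the paper's detour buys is the syndrome corollary relating $P_{\text{fail}}^{\text{opt}}\left(  \mathcal{E}_{m}\right)  $ to $P_{\text{fail}}^{\text{opt}}\left(  \mathcal{E}_{m}^{\ast}\right)  $, which falls out of its chain of inequalities but not out of yours. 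Your treatment of $\left(  \ref{Barnum-knill bound for holevo weighting}\right)  $ via monotonicity of $t\mapsto t\left(  2-t\right)  $ on $\left[  0,1\right]  $ is identical to the paper's.

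Two minor points of bookkeeping. First, the asserted equality $\operatorname*{Tr}\left(  E_{k}^{\min}\rho_{k}\right)  =\operatorname*{Tr}\left(  M_{k}\rho_{k}\right)  $ is false in general: if it held, you would get $P_{\text{fail}}^{\text{HJRF}}=\Gamma$ exactly, collapsing the entire theorem. Only $c_{k}=\left\Vert E_{k}^{\min}\rho_{k}\right\Vert _{1}=\operatorname*{Tr}\left(  E_{k}^{\min}\rho_{k}\right)  $ is true (from $E_{k}^{\min}\rho_{k}\geq0$). Fortunately nothing downstream uses the offending clause---your estimate $\operatorname*{Tr}\left(  M_{k}\rho_{k}\right)  \geq c_{k}^{2}$ comes from the lemma, not from that identity---so the proof stands once it is deleted. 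Second, the direction of Jensen's inequality for the convex function $x\mapsto x^{2}$ does not depend on $c_{k}$ lying in $\left[  0,1\right]  $; that containment (which does hold, via $\left\Vert E_{k}^{\min}\right\Vert \leq1$) is not actually needed.
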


\noindent\textbf{Note:} Curlander proved $\left(
\ref{bound curlander-holevo interval}\right)  $ in the special case of
equiprobable pure states.\cite{Curlander thesis MIT} Barnum and Knill have
already shown that the bound $\left(
\ref{Barnum-knill bound for holevo weighting}\right)  $ holds for the
mixed-state \textquotedblleft pretty good\textquotedblright\ measurement
\cite{Barnum Knill UhOh}. Note that the RHS of $\left(
\ref{Barnum-knill bound for holevo weighting}\right)  $ never exceeds $1$, so
the bound is always meaningful.

\bigskip

\noindent\textbf{NOTE\ ADDED\ TO\ ARXIV\ VERSION: }It was not realized at the
time of publication that the \textit{lower} bound of $\left(
\ref{bound curlander-holevo interval}\right)  $ admits a generalization using
the theory of matrix monotonicity \cite{TysonMon}. Furthermore, this
generalization is a minor variation of a similar bound of \cite{Ogawa and
Nagoaka Strong converse to the quantum channel coding theorem}.\footnote{An
erratum or comment will be sent to JMP to this effect.}

\bigskip

\begin{proof}
First restrict consideration to pure-state ensembles $\mathcal{E}_{m}=\left\{
\left(  \psi_{k},p_{k}\right)  \right\}  _{k=1,\ldots,m}$. By the convexity of
$x\mapsto x^{2}$ and Jensen's inequality,\footnote{The author's argument,
which is similar to Curlander's, was originally movtivated by that used to
prove Lemma 2 of \cite{Montanaro on the distinguishability of random quantum
states}.}%
\begin{align}
P_{\text{succ}}^{\text{HJRF}}  &  =%
{\displaystyle\sum}
p_{k}\left\langle e_{k}^{\text{Holevo}},\psi_{k}\right\rangle ^{2}%
\label{insert PGM here for a mess}\\
&  \geq\left(
{\displaystyle\sum}
p_{k}\left\langle e_{k}^{\text{Holevo}},\psi_{k}\right\rangle \right)
^{2}\nonumber\\
&  =\left(
{\displaystyle\sum_{k}}
p_{k}\left\langle \psi_{k}\right\vert \left(
{\displaystyle\sum_{\ell}}
p_{\ell}^{2}\left\vert \psi_{\ell}\right\rangle \left\langle \psi_{\ell
}\right\vert \right)  ^{-1/2}p_{k}\left\vert \psi_{k}\right\rangle \right)
^{2}\nonumber\\
&  =\left(  1-\Gamma\right)  ^{2}\nonumber
\end{align}
so that%
\[
P_{\text{fail}}^{\text{HJRF}}\leq1-\left(  1-\Gamma\right)  ^{2}=\Gamma\left(
2-\Gamma\right)  \text{.}%
\]
The left-most inequality of $\left(  \ref{bound curlander-holevo interval}%
\right)  $ was already proved in Theorem
$\ref{Theorem mixed state Holevo interval}$.

In the more general case of mixed states, take spectral decompositions
$\rho_{k}=%
{\textstyle\sum\nolimits_{\ell}}
\mu_{k\ell}\left\vert \psi_{k\ell}\right\rangle \left\langle \psi_{k\ell
}\right\vert $ and consider the pure-state ensemble
\begin{equation}
\mathcal{E}_{m}^{\ast}=\left\{  \left(  \left\vert \psi_{k\ell}\right\rangle
,p_{k}\mu_{k\ell}\right)  \right\}  . \label{syndrome ensemble}%
\end{equation}
Note that any measurement $\left\{  M_{k\ell}\right\}  $ for $\mathcal{E}%
_{m}^{\ast}$ may be converted into a measurement $M_{k}=%
{\textstyle\sum\nolimits_{\ell}}
M_{k\ell}$ for $\mathcal{E}_{m}$, which trivially satisfies%
\[
P_{\text{fail}}\left(  \left\{  M_{k\ell}\right\}  \right)  \geq
P_{\text{fail}}\left(  \left\{  M_{k}\right\}  \right)  \text{.}%
\]
In particular, $\mathcal{E}_{m}^{\ast}$ is less distinguishable than
$\mathcal{E}_{m}$, and the measurement $\left(
\ref{Equation for generalized holevo measurement in Theorem statement}\right)
$ is less successful at distinguishing it. Then using $\left(
\ref{equation nice mixed state holevo curlander estimate in theorem statement}%
\right)  $ and the pure-state case,%
\begin{equation}
\Gamma\left(  \mathcal{E}_{m}\right)  \leq P_{\text{fail}}^{\text{opt}}\left(
\mathcal{E}_{m}\right)  \leq P_{\text{fail}}^{\text{HJRF}}\left(
\mathcal{E}_{m}\right)  \leq P_{\text{fail}}^{\text{HJRF}}\left(
\mathcal{E}_{m}^{\ast}\right)  \leq\Gamma\left(  \mathcal{E}_{m}\right)
\left(  2-\Gamma\left(  \mathcal{E}_{m}\right)  \right)
\text{.\label{chain of inequals in proof of mixed state curlander}}%
\end{equation}
Note that last inequality used the identity $\Gamma\left(  \mathcal{E}%
_{m}^{\ast}\right)  =\Gamma\left(  \mathcal{E}_{m}\right)  $.

Note that $\Gamma\in\left[  0,1\right)  $ by $\left(
\ref{Formula for capital gamma}\right)  $. Because $\gamma\mapsto\gamma\left(
2-\gamma\right)  $ is monotonic increasing on $\gamma\in\left[  0,1\right)  $,
the chain of inequalities $\left(
\ref{Barnum-knill bound for holevo weighting}\right)  $ follows by plugging in
the left-hand-side of the first inequality of $\left(
\ref{bound curlander-holevo interval}\right)  $ into the right-hand-side of
the third.
\end{proof}

\noindent\textbf{Remark:} In Schumacher and Westmoreland's classic paper
\cite{mixed state HSW theorem}, the elements of $\mathcal{E}_{m}$ appear as
\textquotedblleft codewords,\textquotedblright\ with \textquotedblleft
syndromes\textquotedblright\ given by elements of $\mathcal{E}_{m}^{\ast}$.
Schumacher and Westmoreland assert that measurements of $\mathcal{E}_{m}%
^{\ast}$ are \textquotedblleft not really more difficult\textquotedblright%
\ than measurements of $\mathcal{E}_{m}$. It is now easy to quantify this assertion:

\begin{corollary}
Let $\mathcal{E}_{m}^{\ast}$ be the ensemble $\left(  \ref{syndrome ensemble}%
\right)  $ of eigenvectors of the elements of $\mathcal{E}_{m}$. Then%
\[
P_{\text{fail}}^{\text{opt}}\left(  \mathcal{E}_{m}\right)  \leq
P_{\text{fail}}^{\text{opt}}\left(  \mathcal{E}_{m}^{\ast}\right)  \leq\left(
1+P_{\text{succ}}^{\text{opt}}\left(  \mathcal{E}_{m}\right)  \right)
P_{\text{fail}}^{\text{opt}}\left(  \mathcal{E}_{m}\right)  \text{.}%
\]

\end{corollary}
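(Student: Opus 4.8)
The plan is to read this corollary off results already in hand, with no genuinely new argument needed. For the left inequality $P_{\text{fail}}^{\text{opt}}(\mathcal{E}_m)\le P_{\text{fail}}^{\text{opt}}(\mathcal{E}_m^{\ast})$ I would reuse the coarse-graining observation from the proof of Theorem~\ref{Theorem result Barnum Knill Proof Montanaro}: any POVM $\{M_{k\ell}\}$ for $\mathcal{E}_m^{\ast}$ yields the binned POVM $M_k=\sum_\ell M_{k\ell}$ for $\mathcal{E}_m$ with $P_{\text{fail}}(\{M_k\})\le P_{\text{fail}}(\{M_{k\ell}\})$; taking the infimum over $\{M_{k\ell}\}$ gives the inequality. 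Thus $\mathcal{E}_m^{\ast}$ is no easier to distinguish than $\mathcal{E}_m$.

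For the right inequality I would chain three facts, each already established. (i) Applying $(\ref{bound curlander-holevo interval})$ to the pure-state ensemble $\mathcal{E}_m^{\ast}$ and using $P_{\text{fail}}^{\text{opt}}\le P_{\text{fail}}^{\text{HJRF}}$ gives $P_{\text{fail}}^{\text{opt}}(\mathcal{E}_m^{\ast})\le\Gamma(\mathcal{E}_m^{\ast})\bigl(2-\Gamma(\mathcal{E}_m^{\ast})\bigr)$, and the identity $\Gamma(\mathcal{E}_m^{\ast})=\Gamma(\mathcal{E}_m)$ rewrites this as $P_{\text{fail}}^{\text{opt}}(\mathcal{E}_m^{\ast})\le\Gamma(2-\Gamma)$ with $\Gamma=\Gamma(\mathcal{E}_m)$. (ii) By the lower bound of Theorem~\ref{Theorem mixed state Holevo interval}, $\Gamma\le P_{\text{fail}}^{\text{opt}}(\mathcal{E}_m)$, and since $\gamma\mapsto\gamma(2-\gamma)$ is increasing on $[0,1)$ we get $\Gamma(2-\Gamma)\le P_{\text{fail}}^{\text{opt}}(\mathcal{E}_m)\bigl(2-P_{\text{fail}}^{\text{opt}}(\mathcal{E}_m)\bigr)$. (iii) The elementary identity $2-P_{\text{fail}}^{\text{opt}}(\mathcal{E}_m)=1+P_{\text{succ}}^{\text{opt}}(\mathcal{E}_m)$ turns the right side into $\bigl(1+P_{\text{succ}}^{\text{opt}}(\mathcal{E}_m)\bigr)P_{\text{fail}}^{\text{opt}}(\mathcal{E}_m)$. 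Concatenating (i)--(iii) yields the claimed bound; this is precisely the monotonicity maneuver used to pass from $(\ref{bound curlander-holevo interval})$ to $(\ref{Barnum-knill bound for holevo weighting})$, now run with $\mathcal{E}_m^{\ast}$ in the numerator.

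I do not expect any real obstacle: the substance is entirely contained in Theorems~\ref{Theorem mixed state Holevo interval} and~\ref{Theorem result Barnum Knill Proof Montanaro}, and the corollary is essentially a repackaging of the chain $(\ref{chain of inequals in proof of mixed state curlander})$. The one point deserving a moment's attention is bookkeeping: one must bound $P_{\text{fail}}^{\text{opt}}(\mathcal{E}_m^{\ast})$ rather than $P_{\text{fail}}^{\text{HJRF}}(\mathcal{E}_m^{\ast})$ by $\Gamma(2-\Gamma)$ (immediate from $P_{\text{fail}}^{\text{opt}}\le P_{\text{fail}}^{\text{HJRF}}$), and one must invoke $\Gamma(\mathcal{E}_m^{\ast})=\Gamma(\mathcal{E}_m)$ so that the final estimate is expressed purely in terms of the single ensemble $\mathcal{E}_m$ appearing on the right-hand side of the statement.
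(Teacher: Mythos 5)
Your proposal is correct and follows the paper's own route exactly: the paper's proof consists of substituting $P_{\text{fail}}^{\text{opt}}(\mathcal{E}_m^{\ast})$ into the chain $(\ref{chain of inequals in proof of mixed state curlander})$ (which already contains the coarse-graining step and the identity $\Gamma(\mathcal{E}_m^{\ast})=\Gamma(\mathcal{E}_m)$) and then running the same monotonicity argument used to derive $(\ref{Barnum-knill bound for holevo weighting})$. Nothing is missing.
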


\begin{proof}
Simply replace the quantity $P_{\text{fail}}^{\text{HJRF}}\left(
\mathcal{E}_{m}\right)  $ by $P_{\text{fail}}^{\text{opt}}\left(
\mathcal{E}_{m}^{\ast}\right)  $ in the chain of inequalities $\left(
\ref{chain of inequals in proof of mixed state curlander}\right)  $, and
continue as in the proof of $\left(  \ref{bound curlander-holevo interval}%
\right)  $.
\end{proof}

\section{Reflections on the quadratic weighting}

As we have seen, the quadratic weighting gives rise to some particularly
simple bounds for distinguishability of quantum states. For comparison,
substituting the linearly-weighted \textquotedblleft pretty
good\textquotedblright\ measurement $\left(  \ref{Def of mixed state PGM}%
\right)  $ into equation $\left(  \ref{insert PGM here for a mess}\right)  $
gives the upper bound%
\[
P_{\text{fail}}^{\text{PGM}}\leq1-\operatorname*{Tr}\left(  \left(
{\displaystyle\sum_{\ell}}
p_{\ell}\left\vert \psi_{\ell}\right\rangle \left\langle \psi_{\ell
}\right\vert \right)  ^{-1/2}\left(
{\displaystyle\sum_{\ell}}
p_{\ell}^{3/2}\left\vert \psi_{\ell}\right\rangle \left\langle \psi_{\ell
}\right\vert \right)  \right)  .
\]

The relative simplicity of the quadratic bound $\left(
\ref{bound curlander-holevo interval}\right)  $ is not surprising. As shown by
the author in \cite{prequel}, Holevo's pure-state measurement $\left(
\ref{Holevo measurement vector}\right)  $ has the following conceptual and
practical advantages over the \textit{ad-hoc} \textquotedblleft pretty
good\textquotedblright\ measurement:\footnote{It is of course assumed that the
\textit{a-priori} probabilities $p_{k}$ are not all the same, or Holevo's
pure-state measurement and the PGM\ would be identical.}

\begin{enumerate}
\item Holevo's asymptotic-optimality property uniquely specifies Holevo's
measurement among the class of Belavkin weighted measurements.

\item Holevo's measurement categorically outperforms the PGM for ensembles of
two pure states.

\item The optimality conditions for Holevo's measurement are particularly simple.
\end{enumerate}

\noindent The previous sections provide more examples of this theme:

\begin{enumerate}
\item[4.] The quadratically-weighted mixed-state measurement gives
particularly simple pure- and mixed-state distinguishability bounds.

\item[5] The approximate cost function $\left(
\ref{Approximate cost function}\right)  $ for the quadratic measurement is
within a factor of two of the function $P_{\text{fail}}\left(  M_{k}\right)
$. (The corresponding cost functions for the pure and mixed-states PGMs
\cite{EldarSquareRootMeasurement, Concha Thesis, Concha and Poor mixed optimal
in lease squares sense, Concha and Poor book chapter on Advances in quantum
detection} admit no such comparison.)
\end{enumerate}

\section{Conclusion and Future Directions}

As we have seen, mathematically concise (and reasonably tight) bounds on the
distinguishability of mixed quantum states may be obtained by combining the
ideas of Holevo, Curlander, and Concha \& Poor. In the above we have not
explained the connection between these ideas and the iterative algorithm of
Je\v{z}ek, \v{R}eh\'{a}\v{c}ek, and Fiur\'{a}\v{s}ek, other than to recognize
that a natural generalization of Holevo's argument gives the first iterate of
Je\v{z}ek \textit{et al}'s measurements.

A proper setting to explore such questions is in the theory of approximate
quantum channel reversals, which Barnum and Knill \cite{Barnum Knill UhOh}
have already investigated using a generalization of the \textquotedblleft
pretty good\textquotedblright\ measurement. We will consider an abstract form
of JRF iteration, study its convergence properties, and construct bounds on
channel reversibility and relative min-entropy in future work \cite{More
coming, TQC2009talk}. We will also attempt to reconsider Holevo's notion of
asymptotic optimality in this setting.

\bigskip

\noindent\textbf{Acknowledgements: }I would like to thank Aram Harrow, Julio
Concha, V. P. Belavkin, and Vincent Poor for pointing out useful references,
Julio Concha and Andrew Kebo for providing copies of their theses, and William
Wootters for providing a copy of Hausladen's thesis, and Stephanie Wehner for
a valuable discussion. I would also like to thank Arthur Jaffe, Chris King,
and Peter Shor for their encouragement and the editors and anonymous referees
for their useful comments and suggestions.

\section*{Appendix A: An application of the trace-Jensen inequality}

The following theorem makes it transparent that $1-\operatorname*{Tr}%
\sqrt{\sum p_{k}^{2}\rho_{k}^{2}}\geq0$, giving some insight into the bounds
$\left(
\ref{equation nice mixed state holevo curlander estimate in theorem statement}%
\right)  $ and $\left(  \ref{bound curlander-holevo interval}\right)  $:

\begin{theorem}
\label{theorem trace jensen shows mixed holevo curlander bound is positive}Let
$f:\left[  0,\infty\right)  \rightarrow\mathbb{R}$ be concave with $f\left(
0\right)  =0$, and consider positive semidefinite operators $A_{k}$ on a
Hilbert space $\mathcal{H}$. Then
\[
\operatorname*{Tr}f\left(  \sum_{k=1}^{N}A_{k}\right)  \leq\operatorname*{Tr}%
\sum_{k=1}^{N}f\left(  A_{k}\right)  \text{,}%
\]
where $f\left(  A\right)  $ is defined using the functional calculus
\cite{Reed and Simon I}. (In particular, $f\left(  A\right)  =%
{\displaystyle\sum}
f\left(  \lambda_{i}\right)  \left\vert \psi_{i}\right\rangle \left\langle
\psi_{i}\right\vert $ for any spectral decomposition $A=%
{\displaystyle\sum}
\lambda_{i}\left\vert \psi_{i}\right\rangle \left\langle \psi_{i}\right\vert $.)
\end{theorem}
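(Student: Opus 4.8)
The plan is to prove the trace-Jensen inequality $\operatorname{Tr} f\left(\sum_k A_k\right) \leq \operatorname{Tr}\sum_k f(A_k)$ for concave $f$ with $f(0)=0$ by first reducing to the two-operator case $N=2$ via induction, and then establishing the two-operator case using a variational characterization of $\operatorname{Tr} f$ together with a subadditivity estimate at the level of individual quadratic forms.

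First I would handle the induction on $N$. Assuming the result for $N=2$, write $\sum_{k=1}^{N} A_k = \left(\sum_{k=1}^{N-1} A_k\right) + A_N$ and apply the two-operator bound once to peel off $A_N$, then invoke the inductive hypothesis on the remaining sum. So everything reduces to showing $\operatorname{Tr} f(A+B) \leq \operatorname{Tr} f(A) + \operatorname{Tr} f(B)$ for positive semidefinite $A, B$.

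For the two-operator case, the key tool is a variational (Ky Fan / Peierls-type) representation: for concave $f$, the map $A \mapsto \operatorname{Tr} f(A)$ on positive semidefinite operators equals $\inf$ over orthonormal bases $\{u_i\}$ of $\sum_i f(\langle u_i, A u_i\rangle)$ (here concavity makes the diagonalizing basis the minimizer, by the trace-Jensen property for a single operator applied coordinatewise — i.e., Schur concavity of $(\lambda_i) \mapsto \sum f(\lambda_i)$). Granting this, pick an orthonormal basis $\{u_i\}$ that diagonalizes $A+B$, so that $\operatorname{Tr} f(A+B) = \sum_i f(\langle u_i, (A+B) u_i\rangle) = \sum_i f(a_i + b_i)$ where $a_i = \langle u_i, A u_i\rangle \geq 0$ and $b_i = \langle u_i, B u_i\rangle \geq 0$. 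Now for each $i$ use the scalar inequality $f(a_i + b_i) \leq f(a_i) + f(b_i)$, which holds because $f$ is concave with $f(0)=0$ (subadditivity of such $f$ on $[0,\infty)$: $f(a) \geq \frac{b}{a+b} f(0) + \frac{a}{a+b} f(a+b) = \frac{a}{a+b} f(a+b)$ and symmetrically for $b$, then add). Summing over $i$ gives $\operatorname{Tr} f(A+B) \leq \sum_i f(a_i) + \sum_i f(b_i)$, and finally $\sum_i f(a_i) = \sum_i f(\langle u_i, A u_i\rangle) \geq \operatorname{Tr} f(A)$ and likewise for $B$, by the variational characterization again (the chosen basis need not diagonalize $A$ or $B$, so we get an upper bound on $\sum f(a_i)$ by $\operatorname{Tr} f(A)$ from the wrong direction — here one must be careful about the direction of the inequality).

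The main obstacle is precisely getting the directions of the variational inequality to line up. For concave $f$, $\operatorname{Tr} f(A)$ is the \emph{minimum} of $\sum_i f(\langle u_i, A u_i\rangle)$ over orthonormal bases, so $\sum_i f(a_i) \geq \operatorname{Tr} f(A)$ — which is the wrong direction for the last step above. The fix is to reorganize: apply the scalar subadditivity in the basis diagonalizing $A+B$ as above to get $\operatorname{Tr} f(A+B) \leq \sum_i f(a_i) + \sum_i f(b_i)$, but then bound each sum using that the diagonalizing basis of $A+B$ is \emph{some} basis, hence $\sum_i f(a_i) \leq \sum_i f(a_i)$ trivially — no, instead one should appeal to the majorization/Schur-concavity statement directly: the vector of diagonal entries $(a_i)$ of $A$ in any basis is majorized by the eigenvalue vector of $A$, so $\sum_i f(a_i) \leq \sum_i f(\lambda_i(A)) = \operatorname{Tr} f(A)$ by Schur concavity of $(\lambda) \mapsto \sum f(\lambda_i)$ for concave $f$. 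This is the correct direction and closes the argument. I would cite the Schur–Horn majorization theorem and Schur concavity (both standard, e.g. available from the functional-calculus references already in the bibliography) rather than reprove them, and present the proof in the order: (i) scalar subadditivity of concave $f$ vanishing at $0$; (ii) reduction to $N=2$; (iii) the two-operator case via diagonalizing $A+B$, coordinatewise subadditivity, and majorization.
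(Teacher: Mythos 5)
Your reduction to $N=2$ and the scalar subadditivity $f(a+b)\leq f(a)+f(b)$ for concave $f$ with $f(0)=0$ are both fine, but the argument collapses at its final step --- and you in fact diagnose the fatal problem correctly before asserting its negation. Writing $\operatorname{Tr}f(A+B)=\sum_i f(a_i+b_i)\leq\sum_i f(a_i)+\sum_i f(b_i)$ in an eigenbasis $\{u_i\}$ of $A+B$, you need $\sum_i f(a_i)\leq\operatorname{Tr}f(A)$. But Schur--Horn plus Schur concavity gives exactly the opposite: the diagonal of $A$ in any orthonormal basis is majorized by its eigenvalue vector, and for \emph{concave} $f$ the map $x\mapsto\sum_i f(x_i)$ is Schur-\emph{concave}, i.e.\ decreasing under majorization, so $\sum_i f(a_i)\geq\operatorname{Tr}f(A)$ --- the Peierls-type inequality you correctly state in the middle of your own paragraph. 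A concrete counterexample to the direction you ultimately invoke: take $f(t)=\sqrt{t}$, $A=\mathrm{diag}(1,0)$, and $u_{1,2}=(1,\pm 1)/\sqrt{2}$; then $\sum_i f(\langle u_i,Au_i\rangle)=\sqrt{2}>1=\operatorname{Tr}\sqrt{A}$. Since each of the two sums in your intermediate bound is individually $\geq$ the quantity you need it to be $\leq$, no rearrangement of these ingredients closes the chain: naive diagonalization of $A+B$ plus scalar subadditivity is known to be insufficient for this (Rotfel'd-type) trace inequality.

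The paper's proof sidesteps this by using a genuinely operator-theoretic input, the Hansen--Pedersen trace-Jensen inequality. With $E_1^{\dagger}=A^{1/2}(A+B)^{-1/2^{+}}$ and $E_2=\sqrt{1-E_1^{\dagger}E_1}$ one has $A=E_1^{\dagger}(A+B)E_1+E_2^{\dagger}\,0\,E_2$, whence $\operatorname{Tr}f(A)\geq\operatorname{Tr}\bigl(E_1^{\dagger}f(A+B)E_1\bigr)$ using $f(0)=0$; adding the symmetric statement for $B$ and applying cyclicity of the trace reassembles $\operatorname{Tr}f(A+B)$ on the right. If you want to retain an elementary flavor, you would have to replace your last step by something like a reduction to the special concave functions $f(t)=\min(t,s)$; some nontrivial operator-level input beyond pointwise Jensen arguments is unavoidable here.
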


\begin{proof}
The case $N=2$ is sufficient. By the trace-Jensen inequality
\cite{HansenPedersenTraceJensen}
\begin{align}
\operatorname*{Tr}f\left(  A\right)   &  =\operatorname*{Tr}f\left(
E_{1}^{\dagger}\left(  A+B\right)  E_{1}+E_{2}^{\dagger}0E_{2}\right)
\nonumber\\
&  \geq\operatorname*{Tr}\left(  E_{1}^{\dagger}f\left(  A+B\right)
E_{1}+E_{2}^{\dagger}f\left(  0\right)  E_{2}\right) \nonumber\\
&  =\operatorname*{Tr}\left(  A^{1/2}\left(  A+B\right)  ^{-1/2^{+}}f\left(
A+B\right)  \left(  A+B\right)  ^{-1/2^{+}}A^{1/2}\right)
\tag{A1}
\label{add cylic 1}%
\end{align}
where
\begin{align*}
E_{1}^{\dagger}  &  =A^{1/2}\left(  A+B\right)  ^{-1/2^{+}}\\
E_{2}  &  =\sqrt{1-E_{1}^{\dagger}E_{1}}.
\end{align*}
Similarly,
\begin{equation}
\operatorname*{Tr}f\left(  B\right)  \geq\operatorname*{Tr}\left(
B^{1/2}\left(  A+B\right)  ^{-1/2^{+}}f\left(  A+B\right)  \left(  A+B\right)
^{-1/2^{+}}B^{1/2}\right)  . \tag{A2}\label{add cyclic 2}%
\end{equation}
The conclusion follows by adding $\left(  \ref{add cylic 1}\right)  $ and
$\left(  \ref{add cyclic 2}\right)  $ and applying the cyclicity of the trace.
\end{proof}

\bigskip

\noindent\textbf{Note added in proof:} The \textit{lower} bound $\Gamma\left(
\xi\right)  \leq P_{\text{fail}}\left(  M_{k}^{\text{opt}}\right)  $ of
Theorem 10 admits a simple generalization proved using matrix monotonicity:
\[
1-\operatorname*{Tr}\left[  \left(
{\textstyle\sum\nolimits_{k=1}^{m}}
p_{k}^{s}\rho_{k}^{s}\right)  ^{1/s}\right]  \leq P_{\text{fail}}\left(
M_{k}^{\text{opt}}\right)  ,
\]
for any $s\in\left[  1,\infty\right)  $. This is addressed in a short note
which has been submitted to this journal.\cite{TysonMon}

\end{document}